\def\bb#1{{\mathbb #1}}
\newcommand{\R}{{\rm I\!R}}
\newcommand{\dfb}{\stackrel{\Delta}{=}}
\def\scr#1{{\mathcal #1}}
\newcommand{\diag}{{\rm diag\;}}
    \newtheorem{remark}{Remark}
    \newtheorem{lemma}{Lemma}
    \newtheorem{theorem}{Theorem}
    \newtheorem{assumption}{Assumption}
\begin{document}
%
\title{Distributed Global Output-Feedback Control for a Class of Euler-Lagrange Systems}
%
%
%

\author{Qingkai Yang, Hao Fang, Jie Chen, Zhong-Ping Jiang, and Ming Cao} 
\date{}
%
%
%
	
\maketitle

\begin{abstract}
This paper investigates the distributed tracking control problem for a class of Euler-Lagrange multi-agent systems when the agents can only measure the positions. In this case, the lack of the separation principle and the strong nonlinearity in unmeasurable states pose severe technical challenges to global output-feedback control design. To overcome these difficulties, a global nonsingular coordinate transformation matrix in the upper triangular form is firstly proposed such that the nonlinear dynamic model can be partially linearized with respect to the unmeasurable states. And, a new type of velocity observers is designed to estimate the unmeasurable velocities for each system. Then, based on the outputs of the velocity observers, we propose distributed control laws that enable the coordinated tracking control system to achieve uniform global exponential stability (UGES). Both theoretical analysis and numerical simulations are presented to validate the effectiveness of the proposed control scheme. 
\end{abstract}

\begin{IEEEkeywords}
Coordinate transformation, global output feedback, distributed control, Euler-Lagrange systems
\end{IEEEkeywords}

%
\IEEEpeerreviewmaketitle

\section{Introduction}
%
%
%
%

%
\IEEEPARstart{R}{ecently}, intensive attention has been paid to distributed control for Euler-Lagrange systems due to its broad applications. Several approaches have been proposed to deal with the distributed tracking problem, see, for example, sliding-mode method \cite{KhXiMa09,MeReMa11}, disturbance observer \cite{YaShQi15}, and extended proportional-integral control scheme \cite{ChFeLi15}. Extension to handle unknown parameter uncertainties can be found in \cite{NuOrBa11}, where an adaptive controller is proposed to synchronize nonidentical Euler-Lagrange systems with communication time delays. Later,  \cite{Wa13} solves the synchronization problem of networked robotic systems with both the kinematic and dynamic uncertainties using passivity theory. It has also been shown that, under a jointly connected switching network topology, leader-following consensus can be achieved for multiple Euler-Lagrange systems by employing adaptive control \cite{CaHu14}, in which various reference signals, such as sinusoidal and ramp signals, generated by an exosystem are considered.

In order to relax the restrictive requirement for full state measurements in designing the controllers in the existing results, some efforts on partial state feedback control have been made. 
 	 To estimate the unmeasurable velocity, observers are constructed by invoking the Immersion and Invariance (I$\&$I) techniques \cite{AsOrVe10,StAaKa11,SaNuKi12}. Through introducing two extra states, some lower dimensional observer is proposed in \cite{StAaKa11} in comparison to that in \cite{AsOrVe10}, and moreover, the explicit expressions of the observer have been given. In \cite{SaNuKi12}, dynamic scaling and high-gain terms have been adopted to perform the Lyapunov stability analysis. 
	  Note that the dynamics of the observers relying on I$\&$I techniques are generally high dimensional and complex. In addition, it is required to find a certain attractive and invariant manifold in the extended state space of the plant and the observer, which will likely increase the computational burden.

  For multi-agent systems, a consensus algorithm using linear observers is first proposed in \cite{Re09}. And, in \cite{MeReCh13}, a distributed control law with time varying control gains is designed to compensate for the lack of neighbors' velocity measurements.  For the distributed tracking problem,  \cite{ZhDuWe15} presents a sliding mode observer-based controller to track the leader with constant velocity in finite time. 
The more challenging problems of tracking a leader with varying velocity have also been investigated in \cite{MeDiJo14,Ch14,Yang14IET}. When only nominal parameters of Euler-Lagrange systems are available, global asymptotic stability can be ensured using continuous control algorithms with adaptive coupling gains \cite{MeDiJo14}. More generally, in cases when we do not have access to any velocity measurement, it is desirable to coordinate the agents using output-feedback strategies. 
However, some drawbacks of the available  results still exist. For example, tracking errors can only be guaranteed to be uniformly ultimately bounded but not converging to zero \cite{Ch14} and the resulted closed-loop system is only locally but not globally stable under the designed control laws \cite{Yang14IET}. For some specific class of nonlinear systems, the global output feedback control problem has been investigated recently in  \cite{LiJi13tac} and \cite{DoHu14tac}, where the cyclic-small-gain approach and distributed internal model have been introduced respectively to achieve global convergence. To deal with the leader's unavailable velocity measurements, distributed observers are designed for second-order agents  in \cite{HoChBu08}.

The goal of this paper is to address the problem of distributed global output-feedback tracking for multiple Euler-Lagrange systems modeling a class of two-link revolute robot manipulators. Up to now, there is no known result for such distributed global tracking algorithms due to several technical challenges. The main difficulties in achieving global stability lie in the quadratic nonlinearities and the cross terms of the unmeasurable velocity states derived from the Coriolis and centrifugal torques. To eliminate such quadratic terms, different state transformation methods have been utilized in \cite{Ba2000,SaOrPa12,JiKa00,DoJiPa04}. In \cite{Ba2000} and \cite{SaOrPa12}, the coordinate transformation strategies are first applied to simplify the nonlinear models, and then controllers are proposed for one-degree-of-freedom Euler-Lagrange systems and underactuated mechanical systems in their Hamiltonian forms, respectively. It should also be noted that both of the techniques in \cite{JiKa00} and \cite{DoJiPa04} pose constraints on the system model, i.e., a class of nonlinear systems that are linear in unmeasured states, globally stabilizable using output feedback \cite{JiKa00}, and with skew-symmetric Coriolis terms \cite{DoJiPa04}. However, the models of the two-link revolute robot manipulators considered in this paper do not possess any of the properties just mentioned that contribute to the simplification of the system model. So all of these approaches cannot be directly applied to the robot manipulators discussed in this paper. 
Inspired by \cite{JiKa00} and \cite{DoJiPa04}, we shall focus on how to partially linearize the dynamics of the robot manipulators through coordinate transformation and state reconstruction. With the help of the model transformation, a distributed velocity observer is proposed,  which enables us to implement the output-feedback control for multiple robot manipulators such that the tracking errors uniformly globally exponentially converge to zero. 

The rest of the paper is organized as follows. Section \ref{sec-02} reviews the system dynamics and presents the method on how to partially linearize the nonlinear system through coordinate transformation. In Section \ref{sec-03}, an observer-based control strategy is proposed based on the partially linearized system. Section \ref{sec-04} gives the main result of this paper, followed by the numerical simulations in Section \ref{sec-05}. Finally, conclusions are provided in Section \ref{sec-06}.

Notations: $|X|$ denotes the determinant of a real square matrix $X$. $\|x \|$ is used to denote the $2$-norm of  a vector $x$. $I_n$ represents the identity matrix with dimension $n$, and $\mathbf{1}_n$ denotes the column vector whose components are all 1. We use $^iX_{jk}$ to denote the $(j,k)$th element of matrix $X_i$.  And $\overline{\lambda}_X$ and $\underline{\lambda}_X$ are the largest and smallest eigenvalues of a real symmetric matrix $X$, respectively.

\section{Partial linearization}\label{sec-02}
In this section, we first briefly introduce the general expression of Euler-Lagrange systems, followed by the specific dynamics of two-link revolute robot manipulators. Then, we present the process removing the cross terms of the velocity states via coordinate transformation. 

\subsection{Dynamics of Robot Manipulator}

We consider here a group of $n$ mechanical robots, each of which is described by a Euler-Lagrange equation as follows:
\begin{equation}  \label{equ01}
M_i(q_i)\ddot{q_i}+C_i(q_i,\dot{q_i})\dot{q_i}+G_i(q_i)=\tau_i, \quad i=1,\cdots, n,  
\end{equation}
where $q_i $ is the vector of the generalized coordinates, $M_i(q_i)$ is the symmetric positive-definite inertia matrix,
$C_i(q_i,\dot{q_i})\dot{q_i} $ is the Coriolis and centrifugal torque, $G_i(q_i)$ is the vector of the gravitational torques,
and $\tau_i $ is the control torque on robot $i$.

The neighbor relationships between the robots are described by a directed graph $\bb G$ with the vertex set $\mathcal{V}=\{1,2,\cdots,n\}$ and the edge set $\mathcal{E}\subseteq \mathcal{V}\times \mathcal{V} $. We use $\scr{A}=[{a_{ij}}]_{n\times n}$ to denote the adjacency matrix, where $a_{ij}=1$ means there is an edge $(j,i)$ between robots $i$ and $j$, and robot $i$ can obtain information from robot $j$, but not vice versa, and $a_{ij}=0$ otherwise. There is one leader robot and the rest are followers. The interaction relationships among the followers and the leader is denoted by the matrix $ B=\diag\{b_1,\cdots, b_n\}$, where $b_i=1$ if the leader is a neighbor of robot $i$, and $b_i=0$ otherwise.  The Laplacian matrix $ L=[l_{ij}]_{n\times n}$ is defined by $l_{ii}=\sum_{j\in \scr N_i} a_{ij}$ and $l_{ij}=-a_{ij}, \, i\neq j$, where $\scr N_i$ denotes the set of neighbors of robot $i$.

It is well known that a wide range of mechanical systems can be represented by Euler-Lagrange equations, such as robot manipulators, mobile robots and rigid bodies. Here,  we focus on a class of two-link revolute robot manipulators, whose dynamics are given by (see \cite{SpHuVi06} for more details)

{\footnotesize  {\setlength \arraycolsep{1pt}  \begin{eqnarray*}
			&M_i(q_i)= \left[ \begin{array}{cc}
				O_{i(1)}+2 O_{i(2)}\cos(q_{i(2)}), & O_{i(3)}+O_{i(2)}\cos(q_{i(2)}) \\
				O_{i(3)}+O_{i(2)}\cos(q_{i(2)}),&O_{i(3)}
			\end{array} \right],  \\
			&C_i(q_i,\dot{q}_i)= \left[ \begin{array}{cc}
				-O_{i(2)}\sin(q_{i(2)})\dot{q}_{i(2)}, & -O_{i(2)}\sin(q_{i(2)})(\dot{q}_{i(1)}+\dot{q}_{i(2)})\\
				O_{i(2)}\sin(q_{i(2)})\dot{q}_{i(1)}, & 0
			\end{array} \right],  \\
			&G_i(q_i)=\left[ \begin{array}{c}
				O_{i(4)}g\cos(q_{i(1)})+O_{i(5)}g\cos(q_{i(1)}+q_{i(2)})\\
				O_{i(5)}g\cos(q_{i(1)}+q_{i(2)})
			\end{array}\right],
		\end{eqnarray*}} }  	
		
		{\noindent
			where $g$ is the acceleration of gravity, $q_i = [q_{i(1)},q_{i(2)}]^T $ represents the joint angles of the two links and $O_i =[O_{i(1)},O_{i(2)},O_{i(3)},O_{i(4)},O_{i(5)}]=[m_1l_{c1}^2+m_{2}(l_1^2+l_{c2}^2)+J_1+J_2,m_2l_1l_{c2},m_2l_{c2}^2+J_2,
			m_1l_{c1}+m_2l_1,m_2l_{c2}]$, in which the variables $m_i, l_i$ and $J_i$ are, respectively, used to denote the masses, the lengths and the moments of inertia of link $i$, and $l_{ci}$ represents the  distance from the previous joint to the center of mass of link $i$, $i=1,2$.  The inertia matrix $M_i (q_i )$ satisfies the following property: for all $q_i \in \R^2$, there exist positive constants $k_m$ and $k_M$ such that $k_m I_2 \leq M_i(q_i) \leq k_M I_2$.}

		\subsection{Coordinate Transformation}
		In order to linearize the quadratic velocity terms in $C_i(q_i,\dot{q}_i)\dot{q}_i$ and to simplify the dynamics model, motivated by  \cite{DoJiPa04}, we introduce the following coordinate transformation
		\begin{equation}\label{equ02}
		z_i=T_i(q_i)\dot{q}_i,
		\end{equation}
		where $T_i(q_i)\in \R^{2\times 2}$, a nonsingular matrix with bounded elements to be determined, is constructed as follows
		\begin{equation}\label{equ2a-01}
		T_i(q_i)= \left[ 
		\begin{array}{cc}
		^{i}T_{11} & ^{i}T_{12} \\
		^{i}T_{21} & ^{i}T_{22}
		\end{array} \right]
		= \left[
		\begin{array}{cc}
		^{i}M_{11} & ^{i}M_{12} \\
		0 & ^{i}T_{22}
		\end{array} \right],
		\end{equation}
		where $^{i}T_{22}$ needs to be determined.
	 Here, instead of fully linearizing system \eqref{equ01}, we aim at partially linearizing the nonlinear mechanical system. Hence, the transformation matrix is chosen to be in its upper triangular form \eqref{equ2a-01}, which not only simplifies the system model, but also reduces the computational complexity greatly  when solving a set of partial differential equations (PDEs).		
		Considering the system model \eqref{equ01}, the dynamics of the new state $z_i$ can be described by 
		\begin{align}\label{equ03}
		\dot{z}_i=&\left(\dot{T_i}(q_i)\dot{q}_i-T_i(q_i)M_i(q_i)^{-1}C_i(q_i,\dot{q}_i)\dot{q}_i\right) \nonumber \\
		&+T_i(q_i)M_i(q_i)^{-1}\left(\tau_i-G_i(q_i)\right).
		\end{align}
		Note that the matrix $T_i(q_i)$ is globally nonsingular  as long as $^{i}T_{22}$ is not equal to zero. In order to determine $^{i}T_{22}$, substituting \eqref{equ2a-01} into \eqref{equ03} yields 
		{\small
			\begin{align}
			\dot{z}_{i(2)} =&  \frac{\partial  ^{i}T_{22}}{\partial q_{i(1)}} \dot{q}_{i(1)}\dot{q}_{i(2)}+\frac{\partial  ^{i}T_{22}}{\partial q_{i(2)}} \dot{q}_{i(2)}^2 \nonumber \\
			&-\frac{^{i}T_{22} \, O_{i(2)}\sin(q_{i(2)})}{|M_i(q_i)| \, ^{i}M_{11}} \left(\, ^{i}M_{11} \dot{q}_{i(1)}+ \,^{i}M_{12} \dot{q}_{i(2)}\right)^2 \nonumber \\
			&+\frac{^{i}T_{22}}{|M_i(q_i)|}O_{i(2)}\sin(q_{i(2)})\frac{\,^{i}M_{12}\left(\,^{i}M_{12}-\,^{i}M_{11} \right) }{ \,^{i}M_{11}} \dot{q}_{i(2)}^2 \nonumber \\
			& + \,^{i}T_{22}(M_i^{-1})_{21} u_{i(1)}+\,^{i}T_{22} (M_i^{-1})_{22} u_{i(2)}. \label{equ10}
			\end{align}}
		Here, for the purpose of removing the cross coupling term $\dot{q}_{i(1)}\dot{q}_{i(2)}$ in $\dot{z}_{i(2)}$, we let 
		\begin{gather}
		\frac{\partial  ^{i}T_{22}}{\partial q_{i(1)}} \dot{q}_{i(1)}\dot{q}_{i(2)}+\frac{\partial  ^{i}T_{22}}{\partial q_{i(2)}} \dot{q}_{i(2)}^2=  \nonumber \\
		\frac{^{i}T_{22}}{|M_i(q_i)|}O_{i(2)}\sin(q_{i(2)})\frac{\,^{i}M_{12}\left(\,^{i}M_{11}-\, ^{i}M_{12} \right) }{ \,^{i}M_{11}} \dot{q}_{i(2)}^2. \label{equ11}
		\end{gather}
		With \eqref{equ11}, the dynamics of $\dot{z}_{i(2)}$ reduce to
		\begin{align}
		\dot{z}_{i(2)} =
		&-\frac{^{i}T_{22} \, O_{i(2)}\sin(q_{i(2)})}{|M_i(q_i)| \, ^{i}M_{11}} \left(\, ^{i}M_{11} \dot{q}_{i(1)}+ \,^{i}M_{12} \dot{q}_{i(2)}\right)^2 \nonumber \\
		& + \,^{i}T_{22}(M_i^{-1})_{21} u_{i(1)}+\,^{i}T_{22} (M_i^{-1})_{22} u_{i(2)}. \label{equbb-01}
		\end{align}
		One can check that one solution to \eqref{equ11} is 
		\begin{equation}\label{equ12}
		^{i}T_{22} =  \sqrt{\frac{|M_i(q_i)|}{^{i}M_{11}}}.
		\end{equation}
	     So, the globally nonsingular transformation matrix $T_i(q_i)$ is obtained as follows    \begin{equation}\label{equaa-100}
		T_i(q_i)= 
		\left[
		\begin{array}{cc}
		^{i}M_{11} & ^{i} M_{12} \\
		0  &   \sqrt{\frac{|M_i(q_i)|}{^{i}M_{11}}}
		\end{array} \right].
		\end{equation}
		Consequently, the coordinate transformation \eqref{equ02} results in the partially linearized system with the state $[q_i^T, z_i^T]^T$, output $y_i$ and input $u_i=\tau_i -G_i(q_i)$
		\begin{equation}\label{equ14}
		\left\{ 
		\begin{aligned}
		&\dot{q}_i=A_i(q_i)z_i  \\
		&\dot{z}_i= f_i(q_i,z_i)+D_i(q_i) u_i  \\
		& y_i= q_i 
		\end{aligned}\right.,
		\end{equation}
		where
		$$
		A_i(q_i)=\left[ \begin{array}{cc}
		\frac{1}{\,^{i}M_{11}} & -\frac{\,^{i}M_{12}}{ \sqrt{\,^{i}M_{11} |M_i(q_i)|}} \\
		0& \sqrt{ \frac{|M_i(q_i)|}{\,^{i}M_{11}} }
		\end{array} \right],
		$$
		$$D_i(q_i)=\left[\begin{array}{cc}
		1&0 \\
		-\frac{\,^{i}M_{12}}{ \sqrt{\,^{i}M_{11} |M_i(q_i)|}} & \sqrt{ \frac{\,^{i}M_{11}}{|M_i(q_i)|} }
		\end{array} \right],
		$$
		and 
		$$
		f_i(q_i,z_i)=\left[0, -\frac{O_{i(2)}\sin(q_{i(2)})}{ \sqrt{\,^{i}M_{11} |M_i(q_i)|} \,^{i}M_{11} } z_{i(1)}^2\right]^T. 
		$$
		It can be seen that the quadratic cross terms of the unmeasurable velocities have been removed from the system dynamics \eqref{equ14}. Moreover, the matrices $A_i$ and $D_i$  are both independent of the velocity states and  bounded. Both of the above properties will facilitate the design of globally stable observers and controllers.

		\begin{remark}\label{remark-01-D}
			For future reference, denote 
			\begin{equation}\label{equ2a-1}
			\delta_{i}(q_{i(2)}) \triangleq -\frac{O_{i(2)}\sin(q_{i(2)})}{ \sqrt{\,^{i}M_{11} |M_i(q_i)|} \,^{i}M_{11} }. 
			\end{equation}
			It follows from the positive definiteness of the inertia matrix $M_i(q_i)$ that $^{i}M_{11} >0 $ and $\inf_{t} \, ^{i}M_{11}(q_i(t)) =O_{i(1)}- 2O_{i(2)}$. Since $M_i(q_i)$ is bounded, we have 
			\begin{equation}\label{equ-002} 
			\sup_{t} \delta_i[{q_{i(2)}(t)}] =\frac{O_{i(2)}}{\sqrt{k_m}} (O_{i(1)}-2O_{i(2)})^{2/3}\triangleq \bar \delta_i >0.
			\end{equation}
		\end{remark}
		
		\begin{remark}\label{remark01}
			The simplification of Euler-Lagrange systems was previously studied in \cite{MaAmVi07}\cite{Ma10}, where the conditions for the existence of the transformation matrix $T_i(q_i)$ were presented based on the equation $\dot T_i(q_i)=T_i(q_i)M_i(q_i)^{-1}C_i(q_i,\dot{q}_i)$. However, for a class of Euler-Lagrange systems, such as the robot manipulators we discussed here and unicycle-type mobile robots \cite{DoJiPa04}, such a nonsingular matrix solution $T_i(q_i)$ does not exist. So, in this paper, a wide class of transformation matrices  is derived from the relaxed  equation, i.e., $\dot{T_i}(q_i) \dot q_i=T_i(q_i)M_i(q_i)^{-1}C_i(q_i,\dot{q}_i) \dot{q}_i$ resulted from \eqref{equ03}.
		\end{remark}

		\begin{remark}\label{remark_model}
		  It can be seen that the computation of the nonsingular coordinate transformation matrix \eqref{equaa-100} relies on the exact knowledge of the inertia parameters. When the parameter uncertainties are taken into account, the construction of robust adaptive controllers needs to be considered based on the parameter linearizability property of Euler-Lagrange systems.
		\end{remark}

		\subsection{Problem Formulation}
		Consider a group of $n$ followers modeled by  \eqref{equ01}, and the leader labeled by $0$ with the same dynamics as the followers. Hence, by employing \eqref{equ02}, the leader's dynamics can also be transformed to \eqref{equ14} with the states $(q_0,z_0)$. The distributed global output-feedback tracking problem is to design local control protocols $u_i$ using only output information for all the followers, such that all the followers' states synchronize to the leader's state globally, i.e., $\lim_{t \to \infty} q_i(t)-q_0(t) =\mathbf 0, i=1,\cdots,n.$

		\section{Output-feedback tracking control} \label{sec-03}
		
		The purpose of this section is to present an observer-based control law to solve the distributed output-feedback tracking problem. Toward this end, we first design the observers to estimate the unmeasurable  velocities. 
		
		\subsection{Observer Design}\label{sec3-a}
		%
		
		Note that system \eqref{equ14} can be rewritten into the following two sub-systems  
		\begin{eqnarray}
		&&\left\{ 
		\begin{aligned}
		&\dot{q}_{i(1)} =\, ^{i}A_{11} z_{i(1)}+\, ^{i}A_{12}z_{i(2)} \\
		& \dot{z}_{i(1)} = u_{i(1)}
		\end{aligned}\right.  \label{equ33-1} \\
		&&\left\{ 
		\begin{aligned}
		& \dot{q}_{i(2)} = \, ^{i}A_{22} z_{i(2)} \\
		& \dot{z}_{i(2)}= \delta_i z_{i(1)}^2 + ^{i}D_{21} u_{i(1)} +^{i}D_{22} u_{i(2)}
		\end{aligned}\right.  \label{equ33-2}
		\end{eqnarray}
		It can be observed from \eqref{equ33-1} that the dynamics of $q_{i(1)}$ incorporate the state $z_{i(2)}$ of the second sub-system. Similarly, the dynamics of $z_{i(2)}$ also depend on the state $z_{i(1)}$ of the first sub-system in \eqref{equ33-2}. This implies that when we design the velocity observers for both of the sub-systems, the convergence analysis of the observation errors for each sub-system is still related to each other, which makes it challenging to design globally stable observers. To handle this problem, motivated by \cite{Ma10}, we aim at fully decoupling the sub-systems by constructing the new sates $x_i=[x_{i(1)},x_{i(2)}]^T$ as follows:
		\begin{align}
		&x_{i(1)}=\int_0^{q_{i(1)}} {^{i}M_{11}}(q_{i(2)}) e^{-(q_{i(1)}-s)} ds+\int_0^{q_{i(1)}} {^{i}M_{12}}(s) ds \nonumber \\
		&x_{i(2)}=\int_0^{q_{i(2)}} {^{i}M_{12}}(s)ds  \label{equ3-14}
		\end{align}
		
		Combining \eqref{equ02}, \eqref{equaa-100},  \eqref{equ33-1} and \eqref{equ33-2} and taking derivative of \eqref{equ3-14}, the dynamics of $(x,z)$ are given by 
		\begin{equation}\label{ob-02}
		\left\{
		\begin{aligned}
		&\dot{x}_{i(1)}=z_{i(1)} \\
		&\dot{z}_{i(1)}=u_{i(1)} \\
		&\dot{x}_{i(2)}=z_{i(2)} \\
		&\dot{z}_{i(2)}=\delta_i(q_{i(2)})z_{i(1)}^2+{^{i}D_{21}} u_{i(1)}+{^{i}D_{22}}u_{i(2)}
		\end{aligned} \right.,
		\end{equation}
		in which the dynamics of the first sub-system $(x_{i(1)},z_{i(1)})$ are independent of the second one $(x_{i(2)}, z_{i(2)})$. Consequently, it is relatively straightforward to 
		design the observers for the two sub-systems in \eqref{ob-02}. 
		For the first sub-system, the observer is designed as 
		\begin{equation}\label{ob-03}
		\left\{ 
		\begin{aligned}
		\dot{\hat{x}}_{i(1)}=-k_{o,1}(\hat{x}_{i(1)}-x_{i(1)}) +\hat{z}_{i(1)} \\
		\dot{\hat{z}}_{i(1)}=-k_{o,2} (\hat{x}_{i(1)}-x_{i(1)})+u_{i(1)}
		\end{aligned}
		\right.,
		\end{equation}
		where $\hat{x}_i$ and $\hat{z}_i$ are the observations of $x_i$ and $z_i$, respectively. Here, $k_{o,1}$ and $ k_{o,2} $ are positive observer gains. Correspondingly, the observation errors are defined as $\tilde{x}_i=\hat{x}_i-x_i$ and $\tilde{z}_i=\hat{z}_i-z_i$, whose dynamics are of the form
		%
		{\setlength \arraycolsep{4pt}
			\begin{equation} \label{ob-05}
			\left [ \begin{array}{c}
			\dot{\tilde{x}}_{i(1)} \\
			\dot{\tilde{z}}_{i(1)}
			\end{array}\right]=
			\left [ \begin{array}{cc}
			-k_{o,1} & 1 \\
			-k_{o,2} & 0
			\end{array}\right ] \left [ \begin{array}{c}
			\tilde{x}_{i(1)} \\
			\tilde{z}_{i(1)}
			\end{array}\right] \dfb \tilde{A}  \left [ \begin{array}{c}
			\tilde{x}_{i(1)} \\
			\tilde{z}_{i(1)}
			\end{array}\right].
			\end{equation}
			It can be easily checked that matrix $\tilde{A}$ is Hurwitz, and therefore system \eqref{ob-05} is exponentially stable at the origin. So  
			\begin{equation}\label{ob-06}
			\lim_{t\to \infty} \left [ \begin{array}{c}
			\tilde{x}_{i(1)}(t) \\
			\tilde{z}_{i(1)}(t)
			\end{array}\right] = 0.
			\end{equation}}
		
		For the second sub-system $(x_{i(2)}, z_{i(2)})$, the  observer is constructed as
		\begin{equation}\label{ob-08}  
		\left\{ 
		\begin{aligned}
		&\dot{\hat{x}}_{i(2)} =-k_{o,1} (\hat{x}_{i(2)}-x_{i(2)}) +\hat{z}_{i(2)} \\
		&\dot{\hat{z}}_{i(2)}=-k_{o,2}(\hat{x}_{i(2)}-x_{i(2)})+ \delta_i\hat{z}_{i(1)}^2+{^{i}D_{2j}}u_{i(j)}
		\end{aligned}
		\right.,
		\end{equation}
		where $j=1,2$. 
		In view of  \eqref{ob-02} and \eqref{ob-08}, we have 
			\begin{equation} \label{ob-12}
			\left [ \begin{array}{c}
			\dot{\tilde{x}}_{i(2)} \\
			\dot{\tilde{z}}_{i(2)}
			\end{array}\right]=
			\tilde{A} \left [ \begin{array}{c}
			\tilde{x}_{i(2)} \\
			\tilde{z}_{i(2)}
			\end{array}\right]+  \left [ \begin{array}{c}
			0 \\
			h_i(t)\tilde{z}_{i(1)}
			\end{array}\right] ,
			\end{equation}
		where $h_i(t)=\delta_i(q_{i(2)}(t)) (\tilde{z}_{i(1)}+2{z}_{i(1)}(t))$ 
		 is continuous in $t$ and $\tilde z_{i(1)}$, and locally Lipschitz in $\tilde z_{i(1)}$. Note that both \eqref{ob-05} and the nominal part of \eqref{ob-12} are uniformly globally exponentially stable (UGES). Then, the origin of the cascaded system \eqref{ob-05} and \eqref{ob-12} is UGES \cite{PaLeLo98}, namely, $\tilde{x}$ and $\tilde{z}$ uniformly globally exponentially converge to zero.

		\subsection{Observer-Based Control Law Design}\label{sec-04}
		The following assumptions are made throughout this paper. 
		\begin{assumption}\label{assum-01}
			The leader's state information $(x_0(t), z_0(t))$ satisfies $\sup_t \| \dot{z}_0(t)\| \leq \bar z_0$. 
		\end{assumption}
		
	          \begin{assumption}\label{assum-02}
				The communication relationships among the $n+1$ robots form a directed graph $\bb G$ that contains a spanning tree rooted at the leader.  
			\end{assumption}
		
		In order to keep this paper self-contained, two lemmas are presented. 
					
				\begin{lemma}\cite{La05} \label{rd2_lem01}
					Let $A\in \R^{m\times n}, B\in \R^{r\times s}, C\in \R^{n\times p}, D\in \R^{s\times t} $. Then
					\begin{equation}\label{eq_r2_la}
						(A\otimes B)(C \otimes D)=AC \otimes BD \quad (\in \R^{mr \times pt}).
					\end{equation}
					And, for all $A$ and $B$, 
					\begin{equation}\label{eq_r2_lb}
						(A\otimes B)^T =A^T \otimes B^T.
					\end{equation}	
				\end{lemma}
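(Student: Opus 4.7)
The plan is to verify both identities by direct blockwise computation from the definition of the Kronecker product. Recall that $A\otimes B$ is the block matrix whose $(i,j)$-th $r\times s$ block equals $a_{ij}B$. Since both claims are standard linear-algebra facts (the paper cites Laub), the proof is purely mechanical, and the main task is just to organize the index bookkeeping so the block sizes match.

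For the mixed-product identity, I would first note that $A\otimes B$ is an $mr\times ns$ matrix partitioned into blocks of size $r\times s$, while $C\otimes D$ is an $np\times st$ matrix partitioned into blocks of size $s\times t$, so the block sizes are conformable and the ordinary block-matrix product is well-defined. The $(i,k)$-th block of $(A\otimes B)(C\otimes D)$ is then
\begin{equation*}
\sum_{j=1}^{n}(a_{ij}B)(c_{jk}D)=\Bigl(\sum_{j=1}^{n}a_{ij}c_{jk}\Bigr)BD=(AC)_{ik}\,BD,
\end{equation*}
which is exactly the $(i,k)$-th block of $AC\otimes BD$. Matching block sizes gives the ambient dimension $mr\times pt$ asserted in the statement.

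For the transpose identity, I would partition $(A\otimes B)$ into its $m\times n$ array of $r\times s$ blocks $a_{ij}B$. Transposition of a block matrix swaps block positions and transposes each block, so the $(j,i)$-th block of $(A\otimes B)^T$ is $(a_{ij}B)^T=a_{ij}B^T=(A^T)_{ji}B^T$, which is the $(j,i)$-th block of $A^T\otimes B^T$. Hence the two matrices are equal entrywise.

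I do not expect any genuine obstacle: the only care needed is to keep the block indices $(i,j)$ versus $(j,i)$ straight when transposing, and to confirm that the block partitions of $A\otimes B$ and $C\otimes D$ are compatible for block multiplication (i.e.\ the column-block count $n$ of the first matches the row-block count $n$ of the second, and the inner dimension $s$ of each small block matches). Once those size checks are in place, both identities drop out from one line of algebra each.
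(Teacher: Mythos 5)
Your proof is correct and is the standard blockwise verification of the mixed-product and transpose properties of the Kronecker product; the paper itself states this lemma without proof, simply citing Laub, and the argument you give is exactly the one found in that reference. Nothing further is needed.
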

				
				\begin{lemma}\cite{Qu09,ZhLe12}\label{rd2_lem02}
					Under Assumption \ref{assum-02}, $( L+  B)$ is a nonsingular M-matrix. Define 
					\begin{equation} \label{eq_lem2}
						\begin{aligned}
							& H=[h_1,\cdots,h_n]^T=( L+ B)^{-1} \mathbf 1_n  \\
							& P=\diag \{p_i\}=\diag \{1/h_i\}
						\end{aligned}
					\end{equation}					
					Then $P$ is positive definite and the matrix $Q$ defined as 
					\begin{equation}\label{eq_lem2_1}
						Q=P( L+ B)+( L+ B)^TP
					\end{equation}  
					is also positive definite.
				\end{lemma}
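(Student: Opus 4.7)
The plan is to establish the lemma in three stages: (i) $L+B$ is a nonsingular M-matrix; (ii) the components of $H=(L+B)^{-1}\mathbf 1_n$ are strictly positive, so $P=\diag\{1/h_i\}$ is well defined and positive definite; and (iii) $Q=P(L+B)+(L+B)^T P$ is positive definite.

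For (i), I would first note the Z-matrix sign pattern of $L+B$: the off-diagonal entries equal $-a_{ij}\le 0$, while the diagonal entries $\sum_{j\in\scr N_i}a_{ij}+b_i$ are nonnegative and weakly dominate each row. Nonsingularity follows from Assumption \ref{assum-02} by the standard maximum-modulus argument: if $(L+B)x=\mathbf 0$ with $x\neq \mathbf 0$, choose $i^\star$ maximizing $|x_i|$; the spanning-tree hypothesis provides a directed path from the leader to $i^\star$, and traversing it in reverse while enforcing equality in the diagonal-dominance inequality forces $|x_j|=|x_{i^\star}|$ at every step, ultimately hitting a follower with $b_i=1$ where the row equation reduces to $b_i x_i=0$, a contradiction. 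Hence $L+B$ is nonsingular, and the Z-pattern combined with nonsingularity yields the M-matrix property; in particular $(L+B)^{-1}\ge 0$ entrywise.

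For (ii), since $(L+B)^{-1}\ge 0$ componentwise and $\mathbf 1_n>0$, we immediately get $H\ge 0$. Strict positivity $h_i>0$ follows from the same reachability: rescaling $L+B$ into the form $\alpha(I-N)$ with $\rho(N)<1$ and expanding $(L+B)^{-1}$ as a Neumann series shows that each row of $(L+B)^{-1}$ picks up at least one strictly positive contribution to $h_i$ from a directed path leading back to the leader. Therefore $P$ is well defined and positive definite.

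Step (iii) is the technical core and the main obstacle. Using the key identity $(L+B)H=\mathbf 1_n$, the change of variables $x=\diag\{h_i\}z$ converts $x^TQx$ into $z^T S z$, where $S=(L+B)\diag\{h_i\}+\diag\{h_i\}(L+B)^T$ is symmetric with nonpositive off-diagonal entries $-a_{ij}h_j-a_{ji}h_i$ and positive diagonal entries $2(L+B)_{ii}h_i$. The difficulty is that the row sums of $S$ equal $1+h_i[(L+B)^T\mathbf 1_n]_i$, which need not be nonnegative for asymmetric digraphs, so a direct Gershgorin bound on $S$ does not close. I would instead invoke the classical fact that every nonsingular M-matrix $M$ admits a diagonal Lyapunov matrix $D=\diag\{1/v_i\}$ with $Mv=\mathbf 1$ and $v>0$ such that $DM+M^TD$ is positive definite; the argument reduces via the Frobenius normal form to the irreducible case, where a Perron-Frobenius analysis of $sI-M$ for sufficiently large $s>0$ supplies the requisite positive definiteness. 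Applied to $M=L+B$ this yields $Q>0$, completing the proof.
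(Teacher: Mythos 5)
Your steps (i) and (ii) are sound: the reachability/maximum-modulus argument for nonsingularity of the diagonally dominant Z-matrix $L+B$, the upgrade to the nonsingular M-matrix property, and the deduction $h_i>0$ from $(L+B)^{-1}\ge 0$ having no zero rows are all correct and standard. (For reference, the paper supplies no proof of this lemma at all --- it is imported verbatim from \cite{Qu09,ZhLe12} --- so there is no in-paper argument to compare against.)

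Step (iii), however, has a fatal gap. You correctly diagnose that the congruence $S=(L+B)\diag\{h_i\}+\diag\{h_i\}(L+B)^T$ has uncontrolled row sums, but you then exit by invoking as a ``classical fact'' that the \emph{specific} choice $D=\diag\{1/v_i\}$ with $Mv=\mathbf 1$ makes $DM+M^TD$ positive definite. That is not a classical fact; it is exactly the statement to be proved, so the argument is circular --- and, worse, the statement is false. Take $n=10$ followers with $b_1=1$, $a_{i1}=1$ for $i=2,\dots,10$, and all other adjacency entries zero (a directed star rooted at agent $1$; Assumption \ref{assum-02} holds). Then $L+B$ is the $10\times 10$ matrix with ones on the diagonal, $-1$ in positions $(i,1)$ for $i\ge 2$, and zeros elsewhere; one computes $H=(1,2,\dots,2)^T$, $P=\diag\{1,\tfrac12,\dots,\tfrac12\}$, hence $Q_{11}=2$, $Q_{1j}=Q_{j1}=-\tfrac12$ and $Q_{jj}=1$ for $j\ge 2$, with all remaining entries zero; the vector $x=(2,1,\dots,1)^T$ then gives $x^TQx=8-18+9=-1<0$. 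The genuine classical fact (Berman--Plemmons) is only that \emph{some} positive diagonal $P$ satisfies $P(L+B)+(L+B)^TP>0$, and the constructive choice needs both Perron-type scalings: with $h=(L+B)^{-1}\mathbf 1>0$ and $v=(L+B)^{-T}\mathbf 1>0$, the Z-matrix $\diag\{v\}(L+B)\diag\{h\}$ has positive row sums (equal to $v$) and positive column sums (equal to $h$), so its symmetrization is strictly diagonally dominant with positive diagonal and therefore positive definite, and congruence by $\diag\{h\}^{-1}$ yields positive definiteness of $P(L+B)+(L+B)^TP$ for $P=\diag\{v_i/h_i\}$ --- not for $\diag\{1/h_i\}$. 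So no repair of your step (iii) can rescue the lemma in the form stated; the correction must be made to the choice of $P$ itself (and this propagates to the gain condition \eqref{eq_r2_them}, which relies on $\underline{\lambda}_Q>0$).
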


		To come up with the observer-based distributed control laws, an auxiliary variable is introduced as follows:
		\begin{equation}\label{cd-02}
		\xi_i=\hat{z}_i-z_0+ \kappa (x_i-x_0),
		\end{equation}
		where $\kappa>1 $ is a constant. The local differences are defined as
		\begin{equation}\label{cd-03}
		\left\{
		\begin{aligned}
		&x_{ir}=\sum_{j \in \scr N_i} a_{ij} (x_i-x_j)+b_i(x_i-x_0) \\
		&z_{ir}=\sum_{j \in \scr N_i} a_{ij}(\hat{z}_i-\hat{z}_j)+b_i(\hat{z}_i-z_0) 
		\end{aligned}\right.,
		\end{equation}
		and
		\begin{equation}\label{cd-04}
		s_i=z_{ir}+\kappa x_{ir}.
		\end{equation}
		 The auxiliary variable $s_i$ can be written into a compact form
		\begin{equation}\label{cd-05}
		s=\left(( L+ B)\otimes I_2\right) \xi.
		\end{equation}
		The distributed control law for robot $i$ is proposed as follows
				\begin{equation} \label{eq_r2_in}
				u_i = D_i^{-1}(q_i)\left(k_{c,1}\tilde{x}_i- k_{c,2} s_i -k_{c,3} {\rm sign}(s_i)-f_i(q_i,\hat{z}_i)\right),
				\end{equation}
				where ${\rm sign} (s_i)=\left[ {\rm sign}(s_{i(1)}),  {\rm sign}(s_{i(2)})\right]^T \in \R^{2}$. Here, $k_{c,1}$ can be any positive  number, and $k_{c,2}$ and $k_{c,3}$ are positive numbers satisfying	
				{\small			
					\begin{equation} \label{eq_r2_them}
					\left\{ 
					\begin{aligned}
					&   k_{c,2}> \frac{1}{\underline{\lambda}_Q} \left(3\kappa \overline{\lambda}_P+ \kappa^2 \overline{\lambda}_P+ \frac{\kappa^2\overline{\lambda}_P}{2(\kappa -1)}+|k_{c,1}-k_{o,2}|\overline{\lambda}_{P} \overline{\sigma}_{( L + B)}\right)\\
					& k_{c,3}> \bar z_0
					\end{aligned} \right. 
					\end{equation}}
     where the real symmetric matrices $P$ and $Q$ are defined  in \eqref{eq_lem2} and \eqref{eq_lem2_1}, respectively.

		
		
		\section{Main results} \label{sec-04}
			The main result of this paper is given below. 
			\begin{theorem} \label{theom01}
				Under Assumptions \ref{assum-01} and \ref{assum-02}, consider the system \eqref{ob-02} transformed from the mechanical system \eqref{equ01} in closed loop with the observer-based controllers given by \eqref{eq_r2_in}. Then the origin of the closed-loop system is UGES for the control gains satisfying \eqref{eq_r2_them} and any positive observer gains $k_{o,1}$ and $k_{o,2}$.
			\end{theorem}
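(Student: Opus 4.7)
The plan is to view the closed-loop system as a cascade and apply the UGES cascade result of \cite{PaLeLo98} already cited in the paper. The observer-error subsystem \eqref{ob-05}--\eqref{ob-12} has been established as UGES at the end of Section~\ref{sec3-a}, so the task reduces to showing that the tracking-error subsystem, expressed in the coordinates $\xi$ from \eqref{cd-02} and $e:=x-\mathbf 1_n\otimes x_0$, is itself UGES once $(\tilde x,\tilde z)$ are treated as exponentially vanishing exogenous inputs.

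First I would derive the closed-loop tracking dynamics. The key cancellation is that the term $-f_i(q_i,\hat z_i)$ inside \eqref{eq_r2_in} exactly annihilates the $\delta_i\hat z_{i(1)}^2$ nonlinearity entering \eqref{ob-08}, so that
\[
\dot{\hat z}=(k_{c,1}-k_{o,2})\tilde x-k_{c,2}s-k_{c,3}\,\mathrm{sign}(s).
\]
Differentiating \eqref{cd-02} and substituting $\hat z-\mathbf 1_n\otimes z_0=\xi-\kappa e$ and $z-\mathbf 1_n\otimes z_0=\xi-\kappa e-\tilde z$ yields the clean error equations
\[
\dot\xi=(k_{c,1}-k_{o,2})\tilde x-k_{c,2}s-k_{c,3}\,\mathrm{sign}(s)-\mathbf 1_n\otimes\dot z_0+\kappa\xi-\kappa^2 e-\kappa\tilde z,\qquad\dot e=\xi-\kappa e-\tilde z,
\]
with $s=((L+B)\otimes I_2)\xi$ by \eqref{cd-05}.

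The main step is a Lyapunov analysis on a composite candidate of the form $V=\tfrac12\xi^T(P\otimes I_2)\xi+\tfrac{\mu}{2}\|e\|^2$, with $(P,Q)$ from Lemma~\ref{rd2_lem02} and $\mu>0$ tuned large enough to absorb the $\kappa^2$ cross-coupling. The dominant dissipation arises from $-2k_{c,2}\xi^T(P\otimes I_2)s=-k_{c,2}\xi^T(Q\otimes I_2)\xi\le-k_{c,2}\underline{\lambda}_Q\|\xi\|^2$, which explains the factor $1/\underline{\lambda}_Q$ in \eqref{eq_r2_them}. The non-smooth feedback $-k_{c,3}\,\mathrm{sign}(s)$ with $k_{c,3}>\bar z_0$ (Assumption~\ref{assum-01}) is designed to offset the exogenous term $-\mathbf 1_n\otimes\dot z_0$ in the sliding-mode sense, so that their combined contribution to $\dot V$ is nonpositive. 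Each of the remaining cross terms---$\kappa\xi^T(P\otimes I_2)\xi$, $-\kappa^2\xi^T(P\otimes I_2)e$, the observer-error couplings $(k_{c,1}-k_{o,2})\xi^T(P\otimes I_2)\tilde x$ and $-\kappa\xi^T(P\otimes I_2)\tilde z$, and the $e^T\xi$, $e^T\tilde z$ terms from $\dot V_e$---is split by Young's inequality with weights chosen to reproduce exactly the four summands $3\kappa\overline{\lambda}_P$, $\kappa^2\overline{\lambda}_P$, $\tfrac{\kappa^2\overline{\lambda}_P}{2(\kappa-1)}$, and $|k_{c,1}-k_{o,2}|\overline{\lambda}_P\overline\sigma_{(L+B)}$ appearing in \eqref{eq_r2_them}; in particular, using weight $\kappa-1$ on the $\xi$--$e$ split is what produces the $1/[2(\kappa-1)]$ factor and simultaneously leaves the counterpart $\mu\kappa\|e\|^2$ intact. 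Collecting all estimates yields $\dot V\le-\alpha V+\beta(\|\tilde x\|^2+\|\tilde z\|^2)$ with $\alpha>0$ precisely when \eqref{eq_r2_them} holds, and the cascade lemma of \cite{PaLeLo98} then promotes this to UGES of the full state.

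The chief obstacle I anticipate is the non-smoothness of $\mathrm{sign}(s)$: solutions must be interpreted in the Filippov sense and $\dot V$ replaced by a Clarke generalized gradient inequality, after which one still has to verify that on the switching set $\{s_{i(k)}=0\}$ the combined term $-\xi^T(P\otimes I_2)[k_{c,3}\,\mathrm{sign}(s)+\mathbf 1_n\otimes\dot z_0]$ remains nonpositive rather than merely bounded. A secondary but unavoidable difficulty is the Young's-inequality bookkeeping that pairs each of the four summands in \eqref{eq_r2_them} with a specific cross term in $\dot V$, since the $\|\xi\|^2$ coefficient must be dominated by $k_{c,2}\underline{\lambda}_Q$ while the $\|e\|^2$ coefficient must be simultaneously dominated by $\mu\kappa$, and these two constraints together pin down both the multiplier $\mu$ and the threshold on $k_{c,2}$.
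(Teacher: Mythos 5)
Your overall architecture --- the exact cancellation of $f_i(q_i,\hat z_i)$ in \eqref{eq_r2_in}, a quadratic composite Lyapunov function, Young's-inequality bookkeeping matched to the summands of \eqref{eq_r2_them}, and a cascade/composite argument with the UGES observer errors --- is the same as the paper's, and your derivation of $\dot{\hat z}=(k_{c,1}-k_{o,2})\tilde x-k_{c,2}s-k_{c,3}\,\mathrm{sign}(s)$ and of the $\dot\xi$, $\dot e$ equations is correct. However, there is a genuine gap in the step you call the sliding-mode domination. With your candidate $V=\tfrac12\xi^T(P\otimes I_2)\xi+\tfrac{\mu}{2}\|e\|^2$, the discontinuous and exogenous terms contribute
\[
-k_{c,3}\,\xi^T(P\otimes I_2)\,\mathrm{sign}(s)-\xi^T(P\otimes I_2)(\mathbf 1_n\otimes\dot z_0)
=-k_{c,3}\sum_{i=1}^n p_i\,\xi_i^T\mathrm{sign}(s_i)-\sum_{i=1}^n p_i\,\xi_i^T\dot z_0 .
\]
Because the signum is evaluated at $s_i$ while the multiplier is $\xi_i$, and $s=((L+B)\otimes I_2)\xi$ mixes the agents, the quantity $\xi_i^T\mathrm{sign}(s_i)$ has no definite sign and is certainly not bounded below by $\|\xi_i\|_1$; moreover the disturbance $\mathbf 1_n\otimes\dot z_0$ is not annihilated by any Laplacian factor in these coordinates, so every agent (not only those adjacent to the leader) sees it. Hence the claimed nonpositivity of the combined contribution fails, and no choice of $k_{c,3}$ repairs it.

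This is precisely why the paper takes the Lyapunov function in the coordinates $s$ and $x_r=((L+B)\otimes I_2)e$, namely $V_c=\tfrac12 s^T(P\otimes I_2)s+\tfrac{\varpi}{2}x_r^Tx_r$. Then $\dot V_c$ produces $-k_{c,3}\,s^T(P(L+B)\otimes I_2)\,\mathrm{sign}(s)$ together with a disturbance entering through $((L+B)\otimes I_2)(\mathbf 1_n\otimes\dot z_0)$; the identity $(L\otimes I_2)(\mathbf 1_n\otimes\dot z_0)=\mathbf 0$ restricts the disturbance to leader-connected agents, while the structure of $P(L+B)$ from Lemma \ref{rd2_lem02} gives $s^T(PL\otimes I_2)\,\mathrm{sign}(s)\ge 0$ and $s^T(PB\otimes I_2)\,\mathrm{sign}(s)=\sum_i p_ib_i\|s_i\|_1\ge\sum_i p_ib_i\|s_i\|$, so that $k_{c,3}>\bar z_0$ from Assumption \ref{assum-01} suffices. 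This coordinate choice also accounts for the factor $\overline{\sigma}_{(L+B)}$ in \eqref{eq_r2_them}, which your $\xi$-coordinate bound would not reproduce. The rest of your plan (Young's inequality yielding the summands of \eqref{eq_r2_them}, with the weight pinned by $\varpi(\kappa-1)>\kappa^2\overline{\lambda}_P/2$, and a final step combining the tracking and observer subsystems --- the paper uses a converse-Lyapunov function for the observer errors and $V=V_c+\varrho V_o$ rather than citing the cascade lemma directly) goes through once this substitution is made.
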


				\begin{proof}[Proof of  Theorem \ref{theom01}]
						
			The Lyapunov function candidate is chosen as 
		\begin{align} \label{eq_r2_001}
			V_c&=\frac{1}{2}s^T(P\otimes I_2) s+ \frac{\varpi}{2}x_r^T x_r \nonumber \\
			&= \frac{1}{2}
			 \left[
		   \begin{array}{c}
		    s \\
		    x_r
		   \end{array}	
			\right]^T  \left[ 
			\begin{array}{cc}
			P\otimes I_2 & 0\\
			0 & \varpi I_{2n}
			\end{array}
			 \right] \left[
			 \begin{array}{c}
			 s \\
			 x_r
			 \end{array}	
			 \right] \nonumber \\
			 & \dfb \frac{1}{2} y^T \bar P y
			\end{align}
			where $y \dfb [s^T, x_r^T]^T$, and $\varpi$ is a positive scalar satisfying $\varpi > \kappa^2 \overline{\lambda}_P/2(\kappa-1)$. It is straightforward to check that matrix $\bar P$ is positive definite and 
   \begin{equation}\label{eq_r2_15}
   \frac{\underline \lambda_{\bar P}}{2} \|y\|^2 \leq V_c(y) \leq  \frac{\overline \lambda_{\bar P}}{2} \|y\|^2.
   \end{equation}			
		The generalized derivative of $V_c$ (see \citep[Remark 3.7]{GhMeRe16}) is given by 
	  	\begin{align}\label{eq_r2_01}
				\dot V_c &= s^T (P\otimes I_2) (( L+ B)\otimes I_2) \big[ \dot{\hat z}- (\mathbf 1_n \otimes \dot z_0) \nonumber \\
				&+\kappa z-\kappa (\mathbf 1_n \otimes z_0) \big]+\varpi x_r^T (s-\kappa x_r-\tilde z_r), 
				\end{align}
			where  $\tilde{z}_{r}=(( L+ B)\otimes I_2) \tilde{z}$. Note that from \eqref{ob-03} and \eqref{ob-08}, we know
			\begin{equation}\label{eq_r2_02}
			\dot{\hat z}=-k_{o,2} \tilde x+f(q,\hat z)+D(q) u.
			\end{equation}
		Also, the control input \eqref{eq_r2_in} can be written in its stacked form as
			\begin{equation}\label{eq_r2_10}
			u=D(q)^{-1} \left(k_{c,1} \tilde x - k_{c,2} s-k_{c,3} {\rm sign} (s)-f(q,\hat z)\right),
			\end{equation}
			where $D(q)^{-1}= {\rm blockdiag} \{D_1^{-1},\cdots, D_n^{-1} \}\in \R^{2n\times 2n}$ and ${\rm sign} (s)$$=\left[ {\rm sign}(s_1)^T , \cdots, {\rm sign}(s_n)^T\right]^T \in \R^{2n}$.		
			Substituting \eqref{eq_r2_02}  and \eqref{eq_r2_10} into \eqref{eq_r2_01}, we have  
				\begin{align}\label{eq_r2_03}
				\dot V_c=& -k_{c,2}s^T(P( L+ B)\otimes I_2)s-\varpi \kappa x_r^T x_r\nonumber \\
			&	+(k_{c,1}-k_{o,2})s^T(P( L+ B)\otimes I_2)\tilde{x} \nonumber \\				
				& + \kappa s^T (P( L+ B)\otimes I_2) (z-\hat z +\hat z- \mathbf 1_n \otimes z_0)  \nonumber \\
			&	+\varpi x_r^Ts-\varpi x_r^T \tilde z_r -k_{c,3}  s^T \left(P  L\otimes I_2 \right) {\rm sign(s)}\nonumber \\
				& - k_{c,3} s^T \left(P B \otimes I_2 \right) {\rm sign(s)} + s^T (P B\otimes I_2)(\mathbf 1_n \otimes \dot z_0),    
				\end{align}
			where  Lemma \ref{rd2_lem01} and the equality that $( L \otimes I_2) (\mathbf 1_n \otimes \dot z_0)=\mathbf 0$ have been used. 
			 
             Note that 
			\begin{equation}\label{eq_r2_04}
			z-\hat z=-\tilde{z}=- (( L+ B)\otimes I_2)^{-1} \tilde z_r,
			\end{equation}
			and 
			\begin{equation}\label{eq_r2_05}
			\hat z-\mathbf 1_n \otimes z_0= (( L+ B)\otimes I_2)^{-1} z_r=(( L+ B)\otimes I_2)^{-1} (s-\kappa x_r).
			\end{equation}
			Then, substituting \eqref{eq_r2_04} and \eqref{eq_r2_05} into \eqref{eq_r2_03} yields
				\begin{align}\label{eq_r2_06}
				\dot V_c = &-k_{c,2}s^T(P( L+ B)\otimes I_2)s -\varpi \kappa x_r^T x_r\nonumber \\
				& - \kappa s^T (P( L+ B)\otimes I_2)(( L+ B)\otimes I_2)^{-1} \tilde z_r  \nonumber \\
				&  + \kappa s^T (P( L+ B)\otimes I_2) (( L+ B)\otimes I_2)^{-1} (s-\kappa x_r)\nonumber  \\
				&  +(k_{c,1}-k_{o,2})s^T(P( L+ B)\otimes I_2)\tilde{x}+\varpi x_r^Ts-\varpi x_r^T \tilde z_r \nonumber \\
				& -k_{c,3}  s^T \left(P  L\otimes I_2 \right) {\rm sign(s)}- k_{c,3} s^T \left(P B \otimes I_2 \right) {\rm sign(s)} \nonumber \\
				& + s^T (P B\otimes I_2)(\mathbf 1_n \otimes \dot z_0).
				\end{align}
			Therefore 
				\begin{align}\label{eq_r2_07}
				\dot V_c =& -k_{c,2}s^T(P( L+ B)\otimes I_2)s -\varpi \kappa x_r^T x_r - \kappa s^T (P\otimes I_2) \tilde z_r \nonumber  \\
				& + \kappa s^T (P\otimes I_2) (s-\kappa x_r) +\varpi x_r^Ts-\varpi x_r^T \tilde z_r \nonumber \\
				&+(k_{c,1}-k_{o,2})s^T(P( L+ B)\otimes I_2)\tilde{x} 	 \nonumber \\
					& -k_{c,3}  s^T \left(P  L\otimes I_2 \right) {\rm sign(s)}- k_{c,3} s^T \left(P B \otimes I_2 \right) {\rm sign(s)} \nonumber \\
				& + s^T (PB\otimes I_2)(\mathbf 1_n \otimes \dot z_0)\nonumber \\
			  =& -\frac{k_{c,2}}{2} s^T(Q\otimes I_2) s-\varpi\kappa x_r^T x_r+\kappa s^T (P\otimes I_2)s \nonumber \\
				& -\kappa s^T (P\otimes I_2) \tilde z_r -\kappa^2 s^T (P \otimes I_2) x_r +\varpi x_r^Ts-\varpi x_r^T \tilde z_r\nonumber \\
	  &	+(k_{c,1}-k_{o,2})s^T(P( L+ B)\otimes I_2)\tilde{x} 	 \nonumber \\
		& -k_{c,3}  s^T \left(P  L\otimes I_2 \right) {\rm sign(s)}- k_{c,3} s^T \left(P B \otimes I_2 \right) {\rm sign(s)} \nonumber \\
			              & + s^T (P B\otimes I_2)(\mathbf 1_n \otimes \dot z_0).				
				\end{align}
			
			From Lemma \ref{rd2_lem02}, we know matrix $Q$ and $P$ are positive definite. Then it follows 
				\begin{align}\label{eq_r2_08}
				\dot V_c \leq & -\frac{k_{c,2}}{2} \underline{\lambda}_Q \|s\|^2-\varpi \kappa \|x_r\|^2 + \kappa \overline{\lambda}_P \|s\|^2 \nonumber \\ 
				&+\frac{\kappa}{2} \overline{\lambda}_P(\|s\|^2+\|\tilde{z}_r\|^2)+\frac{\kappa^2}{2}\overline{\lambda}_P(\|s\|^2+\|x_r\|^2)\nonumber \\
				 &+\frac{\varpi}{2}(\|s\|^2+\|x_r\|^2)+\frac{\varpi}{2}(\|\tilde{z}_r\|^2+\|x_r\|^2)\nonumber \\	
				&+\frac{|k_{c,1}-k_{o,2}|}{2} \overline{\lambda}_{P} \overline{\sigma}_{( L + B)}(\|s\|^2+\|\tilde{x}\|^2)
				 \nonumber \\
				& -k_{c,3} \sum_{i=1}^n p_i b_i \|s_i \|_1+ \sum_{i=1}^{n} p_i b_i \bar z_0 \|s_i\|  \nonumber \\
				\leq & -\alpha_1 \|s\|^2 -\alpha_2 \|x_r\|^2+\alpha_3 \|\tilde{z}_r\|^2 +\alpha_4 \|\tilde x\|^2, 
			\end{align}
			where $\alpha_i, i=1,\cdots,4$, are given by 
			{\small
			\begin{align}
			& \alpha_1 = \frac{1}{2} \left(k_{c,2}\underline{\lambda}_Q -3\kappa \overline{\lambda}_P- \kappa^2\overline{\lambda}_P-\varpi-|k_{c,1}-k_{o,2}|\overline{\lambda}_{P} \overline{\sigma}_{( L + B)} \right),\nonumber \\
			& \alpha_2= \varpi\kappa-\varpi-\frac{\kappa^2}{2} \overline{\lambda}_P, \nonumber \\
			& \alpha_3 = \frac{\kappa}{2} \overline{\lambda}_P +\frac{\varpi}{2}, \nonumber \\
			& \alpha_4 =\frac{|k_{c,1}-k_{o,2}|}{2} \overline{\lambda}_{P} \overline{\sigma}_{( L + B)}.
			\end{align}	}
 $\overline{\sigma}_{(X)}$ represents the largest singular value of matrix $X$, and here we have used the facts that $s^T\left(P  L \otimes I_2 \right) {\rm sign(s)} \geq 0 $ and  $\|s_i\|_1 \geq \|s_i \|$. Under the condition \eqref{eq_r2_them} 
 and the constraint for $\varpi$ in \eqref{eq_r2_001},
 the parameters $\alpha_i>0, i=1,\cdots,4$. Hence, $\dot V_c$  satisfies
		\begin{align}\label{eq_r2_14}
		\dot V_c \leq - \min\{\alpha_1, \alpha_2 \} \|y\|^2+\alpha_3 \|\tilde{z}_r \|^2 +\alpha_4 \|\tilde x\|^2. 
		\end{align}
		Combining \eqref{eq_r2_15} and \eqref{eq_r2_14}, we get
		\begin{align}\label{eq_r2_16}
		\dot V_c(y) &\leq - \frac{2 \min\{\alpha_1, \alpha_2 \}}{\overline \lambda_{\bar P}} V_c(y) +\alpha_3 \|\tilde{z}_r \|^2 +\alpha_4 \|\tilde x\|^2.  
		\end{align}
		
	 Recall that $\|\tilde{z}\|$ and $\|\tilde{x}\|$ converge to zero exponentially from Section \ref{sec3-a}, so does $\|\tilde z_r\|$ due to the fact that  $\tilde{z}_{r}=(( L+ B)\otimes I_2) \tilde{z}$. Then from the Converse Theorem \cite{Khalil02}, there exist a function $V_o(t,\tilde z_r,\tilde x)$ that satisfies the inequalities
			\begin{equation}\label{eq-f-1}
				\dot{V}_o \leq -\beta( \|\tilde z_r\|^2 + \|\tilde x\|^2 ),	
			\end{equation}
		where $\beta$ is a positive constant. Then, we choose the overall Lyapunov function candidate as 
		\begin{equation}\label{eq-f-2}
		    V=V_c+\varrho V_o,
		\end{equation}
		where $\varrho$ is a positive constant satisfying $\varrho > \max\{\frac{\alpha_3}{\beta}, \frac{\alpha_4}{\beta}\}$. For simplicity, we use $\bar \alpha \dfb 2 \min\{\alpha_1, \alpha_2 \}/\overline \lambda_{\bar P}$. Taking the time derivative of both sides of \eqref{eq-f-2}, and combining \eqref{eq_r2_16} and \eqref{eq-f-1}, we have
		\begin{equation}
		    \dot V \leq  -\bar \alpha V_c - \max\{(\beta \varrho - \alpha_3), (\beta \varrho - \alpha_4) \} (\|\tilde z_r \|^2 + \|\tilde x \|^2).
		\end{equation}
	Denoting $\bar \beta \dfb \max\{(\beta \varrho - \alpha_3), (\beta \varrho - \alpha_4) \}$, one has 
	 \begin{equation} 
	   \dot V \leq \left\{ \begin{aligned}
	      & - \bar \alpha V - (\bar \beta - \bar  \alpha \varrho) V_o,  \;\; \text{if} \;\; \bar \beta > \bar \alpha \varrho, \\
	      & -\frac{\bar \beta}{\varrho} V -(\bar \alpha - \frac{\bar \beta}{\varrho} ) V_c,  \;\; \text{if} \;\; \bar \beta \leq \bar \alpha \varrho.
	   \end{aligned}
	   \right.
	 \end{equation} 		
		Consequently, we get 
		\begin{equation}
		   \dot V \leq - \min\{ \bar \alpha, \bar \beta /\varrho \} V.
		\end{equation}
    It then follows from \citep[Theorem 4.10]{Khalil02} that $[s^T, x_r^T]^T =\mathbf 0$ is UGES,
	 which implies $ x_{ir}$ and $z_{ir}$ converge to $\mathbf{0}$ according to the definitions \eqref{cd-03} and \eqref{cd-04}. By invoking \citep[Theorem 4]{KhXiMa09} and  $\lim_{t\to \infty} \hat{z}=z$, we know $z_i$ and $x_i$, respectively, converge to $z_0$ and $x_0$ in the sense of uniform global exponential stability. This implies $q_i(t)$ exponentially globally converges to $q_0(t)$ due to the same integrand in definition \eqref{equ3-14}.
	
        Till now, it has been proved that the observation errors uniformly globally exponentially converge to zero for any positive observer gains $k_{o,1}$ and $k_{o,2}$, and the origin of the closed-loop system is UGES for the control gains satisfying \eqref{eq_r2_them}. In addition, the global convergence of the tracking errors can still be guaranteed if the observer states $(\hat{x}, \hat{z})$ are replaced by the real states $(x, z)$ in controller \eqref{eq_r2_in}. These imply that the observer and the controller can be designed separately, namely the separation principle holds. 
		 \end{proof}

			\begin{remark}
			 In comparison with the $I \& I$ technique reported in \cite{AsOrVe10,StAaKa11,SaNuKi12}, our proposed approach can reduce the complexity of the observer and the distributed controller design. Moreover, the theoretical analysis can also be more easily carried out based on the resulted partially linearized cascaded system. 
				The method to partially linearize the class of Euler-Lagrange system can be applied to deal with the systems of $n$ DOFs,  $n>2$, by setting the transformation matrix in a particular upper triangular form. However, to solve a set of PDEs with high order is a challenging, yet interesting issue for future research. 
		\end{remark}

\begin{remark}
	To eliminate the chattering behavior caused by the  signum function, in practice, various continuous functions such as the hyperbolic tangent function and the saturation function \cite{SoLi91} have been employed to approximate the discontinuous signum function.  
	Although the damage exerted on the actuator caused by  discontinuity could be avoided, the states of the closed-loop system might only be stabilized within a bounded neighborhood of the equilibrium, instead of converging to the equilibrium. 	
\end{remark}

		\section{simulations} \label{sec-05}
		To validate the theoretical results derived in the preceding sections, we shall consider four robot manipulators modeled by \eqref{equ01}, with the physical parameters taken from \cite{Re09} as $[m_1, m_2, l_1, l_2,l_{c1},$ $ l_{c2},J_1,  J_2,g]=[0.5, 0.4, 0.4, 0.3, 0.2, 0.15, 0.0067, 0.003, 9.8]$. The communication topology among the four followers and the leader is represented by Fig. \ref{figure01}.
		
		The initial values of $x_i(t)$ and $\hat z_i(t),  i=1,\cdots,4,$ are set as $3*\text{rands}(2,1)$ and $[0,0]^T$, respectively,  and the leader's trajectory is $[2*t,\sin(t)]^T$, satisfying Assumption \ref{assum-01}. The following parameters are used in the simulation. Observer parameters: $k_{o,1}=3, k_{o,2}=5$, control gains: $k_{c,1}=5, k_{c,2}=6, k_{c,3}=3$.  By employing the distributed control laws \eqref{eq_r2_in}, the numerical simulation  results are shown in Fig. \ref{figure02}-Fig. \ref{figure04}.  It can be seen from Fig. \ref{figure02} and Fig. \ref{figure03} that the tracking errors for each event converge to zero. Fig. \ref{figure04} indicates that each agent can precisely observe the unmeasurable velocities using the proposed observer \eqref{ob-03} and \eqref{ob-08}.
		
		\usepgflibrary{arrows}
		
		\tikzset{
			agent/.style = {circle, draw=black, fill=lightgray},
			cable/.style={ thick, densely dashed},
			strut/.style={line width=1.5pt},
			bar/.style={ line width=0.5pt}
		}
		
		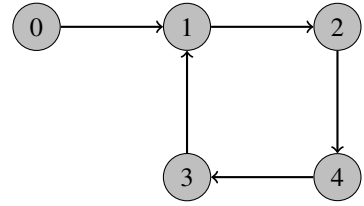
\begin{figure}
			\centering 
			\setlength{\abovecaptionskip}{-0pt}
			\begin{tikzpicture}
			\node[agent] (0) at(-2,2) {0};
			\node[agent] (1) at(0,2) {1};
			\node[agent] (2) at(2,2) {2};
			\node[agent] (3) at(0,0) {3};
			\node[agent] (4) at(2,0) {4};
			\draw[->,thick] (0) -- node[]{} (1);
			\draw[->,thick] (1) -- node[]{} (2);
			\draw[<-,thick] (1) -- node[]{} (3);
			\draw[->,thick] (2) -- node[]{} (4);
			\draw[<-,thick] (3) -- node[]{} (4);
			\end{tikzpicture}
			\caption{Interaction topology, where agent $0$ is the leader.}
			\label{figure01}
		\end{figure}
		\vspace{-10pt}
		
%
%

		\begin{figure}[!htb]
	\centering
	\setlength{\abovecaptionskip}{0pt}
	\includegraphics[width=\hsize]{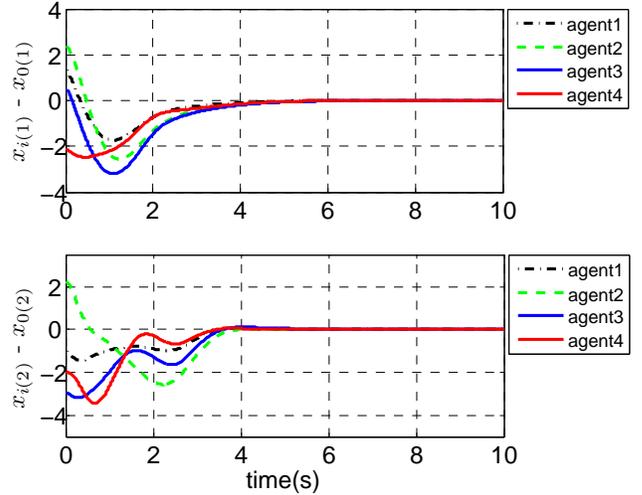}
	\caption{Position tracking errors for each agent}  
	\label{figure02}
\end{figure}

\begin{figure}[!htb]
	\centering
	\setlength{\abovecaptionskip}{0pt}
	\includegraphics[width=\hsize]{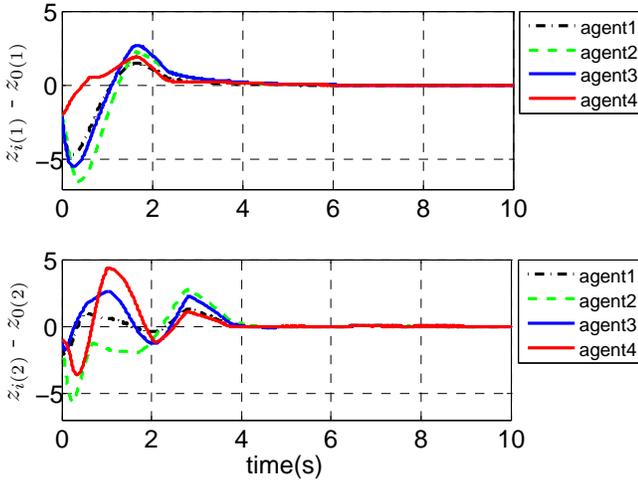}
	\caption{Velocity tracking errors for each agent }  
	\label{figure03}
\end{figure}

\begin{figure}[!htb]
	\centering
	\setlength{\abovecaptionskip}{0pt}
	\includegraphics[width=\hsize]{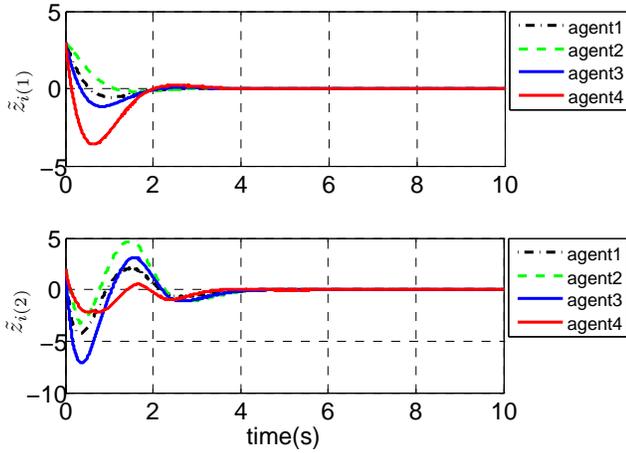}
	\caption{Velocity observation errors for each agent }  
	\label{figure04}
\end{figure}
		
		\section{conclusion} \label{sec-06}
		In this paper, for a class of Euler-Lagrange systems that cannot be fully linearized by output-feedback, we have constructed a nonsingular coordinate transformation matrix to partially linearize the nonlinear systems. Then, observers have been designed to overcome the unavailability of the velocity measurements. We have also proposed observer-based control laws by output-feedback  such that the followers  uniformly globally exponentially track the leader.  It should be noted that the system discussed here is fully actuated. Future research directions may include distributed global output-feedback control for a class of underactuated Euler-Lagrange systems, such as nonholonomic wheeled mobile robots studied in \cite{DoJiPa04}. We are also interested in distributed observerless global output-feedback control for Euler-Lagrange systems with parameter uncertainties reported recently in \cite{Lo16}.

\section*{Acknowledgment}

	The authors sincerely thank the associate editor and all the
	anonymous reviewers for their valuable comments which helped
	improve the quality of this paper. 
	

\ifCLASSOPTIONcaptionsoff
  \newpage
\fi



\bibliographystyle{IEEEtran}
\bibliography{ref_qingkai}
\title{Errata to ``Distributed Global Output-Feedback Control for a Class of Euler-Lagrange Systems''}
%
%
%


\author{Qingkai Yang}	
\date{}
%
\maketitle

In the above paper \cite{Yang2017TAC}, the expression of the matrix $A_i(q_i)$ in (10) should write as 
\setcounter{equation}{9}
\begin{equation}
A_i(q_i) = \left[ \begin{array}{cc}
\frac{1}{^i M_{11}} & - \frac{ ^i M_{12}}{\sqrt{ ^i M_{11} |M_i(q_i)|}} \vspace{5pt}  \\ 
0 & \sqrt{ \frac{ ^i M_{11}}{|M_i(q_i)|}}
\end{array}\right]
\end{equation}

The equation (15) is corrected to the following form
\setcounter{equation}{14}
\begin{equation}
\begin{aligned}
x_{i(1)} &= \int \, ^i M_{11}(q_{i(2)}) d q_{i(1)} + \int_{0}^{q_{i(2)}} \, ^i M_{12}(s)ds \\
x_{i(2)} &= \int_{0}^{q_{i(2)}} \, ^i T_{22}(s) ds
\end{aligned}
\end{equation}

Note that given a function $f$, a sufficient condition for the existence of a primitive function is that $f$ is continuous.
Therefore, the existence of $x_{i(1)}$ can be guaranteed by taking the continuity of the term $^i M_{11}(q_{i(2)}) \dot q_{i(1)}$ into account.


\bibliographystyle{IEEEtran}

%

\end{document}